\begin{document}

\title{A Characterization of Totally Compatible Automata}

\runningauthors{\textsc{D.~Casas}}
\author[EKB]{David Fernando Casas Torres}
\address[EKB]{Institute of Natural Sciences and Mathematics,
  Ural Federal University,\\
  Pr. Lenina 51,
  620000 Ekaterimburg,
  Russia\\
  \email{dkasastorres@urfu.ru}
 }

\date{}

\maketitle
\begin{abstract}
Every function on a finite set defines an equivalence relation and, therefore, a partition called the kernel of the function. Automata such that every possible partition is the kernel of a word are called totally compatible.  A characterization of such automata is given together with an algorithm to recognize them in polynomial running time with respect to the number of states. 
\keywords
	automata, set partitions
\end{abstract}
\section{Background and Motivation}

A \emph{Deterministic Finite Automaton} (DFA), here simply called an automaton, is a triple \(\mathcal{A} = (Q,\Sigma, \delta)\). Both \(Q\) and \(\Sigma\) are finite sets and \(\delta \colon Q \times \Sigma \rightarrow Q\) is a function that describes the action of the \emph{alphabet} \(\Sigma\) on the \emph{state} set \(Q\). Fixing a letter \(a \in \Sigma\) in the second argument of \(\delta\) produces a transformation \(\delta_a \colon Q \rightarrow Q\) that is defined as \(\delta_a(q) : = \delta(q,a) \); here it will be denoted with right notation, i.e., \(\delta_a(q) \: = q \cdot a\). Let \(\Sigma^\ast\) be the set of all words of the alphabet \(\Sigma\) (including the empty word, denoted by \(\varepsilon\)). The action of a word \(w \in \Sigma^\ast\) over \(Q\) can be defined recursively: \(q\cdot \varepsilon := q\) for every state \(q\); if \(w = va\) with \(v \in \Sigma^\ast\) and \(a \in \Sigma\) then for every state \(q \in Q\), \(q \cdot w := (q \cdot v) \cdot a\).  That is why, when discussing automata, we consider just their states and alphabet. Note how any word \(w \in \Sigma^\ast\) produces a transformation over the set of states, this is the composition of the transformations defined by each letter.

If \(P \subseteq Q\) is a  nonempty set of states, its image under \(w\) is the set 
\[P \cdot w := \{p \cdot w \mid p \in P\}.\]
 The word \(w \in \Sigma^\ast\) \emph{synchronizes} a set of states \(P \subseteq Q\) if \(\vert P \cdot w\vert = 1\), i.e., the transformation sends every state of \(P\) to a single state. An automaton for which there is a word that synchronizes the set of all states, i.e., \(\vert Q \cdot w \vert  = 1\),  is called synchronizing; such word is called a \emph{reset} word.\\
 
  Given a transformation or, what is the same, a word, \(w\) over  \(Q\), consider the following relation:
\[\ker(w) := \{(p,q) \in Q \times Q \mid p \cdot w = q \cdot w\}.\]
It is easy to see that this is an equivalence relation, which is called the \emph{kernel} of the transformation \(w\). The equivalence relation that produces an unique class (all the states are related) and the one that produces one class for each state (each element is related just with itself) are called the trivial equivalence relations.\\  

In \cite{bondar2016completely,bondar2018characterization} it was considered the reachability of non-empty subsets of states. Let \(\mathcal{A} = (Q,\Sigma)\) be an automaton, the subset \(P \subseteq Q\) is \emph{reachable} if there is a word \(w \in \Sigma^\ast\) such that \(Q \cdot w = P\).  The automaton \(\mathcal{A}\) is \emph{completely reachable} if every non-empty subset \(P \subseteq Q\)  is reachable. In \cite{bondar2018characterization} a condition to characterize these automata was given. This characterization depended on the strong connectivity of a certain  graph constructed recursively. It is worth noting that this characterization does not lead, however, to a polynomial time algorithm for recognizing complete reachabilty and no such algorithm is known so far. A completely reachable automaton has the capacity of ``realizing'' every non-empty subset with a word. Inspired by this it is natural to consider the dual concept: partitions. While the images of words define subsets, their kernels define partitions and it is  possible to look for automata that realize every partition of the state set. This means that we focus not on the image set of the transformation( the \textit{``right side"}), but on the partition  defined by the transformation over the domain (the \textit{``left side''}).\\

In this paper we will consider the class of automata such that every possible equivalence relation in the set of states can be obtained from a word. First we properly define them and give some examples. In Section 3 a characterization of these automata is given. Following this characterization, we will show an algorithm to identify these automata; then proving that this property is decidable in polynomial time on the number of states.

\section{Definition and Examples}

Let \(\mathcal{A} = (Q, \Sigma)\) be an automaton and \(\rho\) an arbitrary equivalence relation over the set \(Q\). A word \(w \in \Sigma^\ast \) is \emph{compatible} with \(\rho\) if and only if \(\ker(w) = \rho\). From this definition the following remark can be easily seen:

\begin{remark}
Let \(\mathcal{A} = (Q,\Sigma)\) be an automaton and \(\rho\) an equivalence relation over \(Q\) with \(R_1, \dots, R_k\) (\(k \geq 1\)), its  correspondent equivalence classes. A word \(w \in \Sigma^\ast\) is compatible with \(\rho\) if and only if:
\begin{enumerate}
\item The word synchronizes each class, i.e., \(\vert R_i \cdot w \vert = 1\) for each \(1 \leq i \leq k\), and
\item the images by \(w\) of different classes are pairwise different, i.e., \(R_i \cdot w \neq R_j \cdot w\) for all \(1 \leq i \neq j \leq k\).
\end{enumerate}
\label{CharacterizingCompatible} 
\end{remark}

Note that any permutation (or the empty word) and a reset word are compatible with the trivial partitions, those made from just singletons and the whole set respectively.\\

\begin{definition}
Let \(\mathcal{A} = (Q, \Sigma)\) be an automaton. \(\mathcal{A}\) is \textbf{totally compatible} if for every equivalence relation \(\rho\) over the set \(Q\) there is a word \(w_\rho \in \Sigma^\ast\) compatible with this relation.
\end{definition}

A trivial example of a totally compatible automaton is one that its transformations generate the full transformation semigroup. Note that any totally compatible automaton is synchronizing.\\

The automaton \(\mathcal{T} = (\{1,2,3\}, \{c, t\})\), where \(c = (1,2,3)\), is the cyclic permutation, and \(1 \cdot t = 2 \cdot t = 1\) and \(3 \cdot t = 3\) (see figure \ref{AutomatonT}), is an example of a totally compatible automaton. In  the table \ref{Table3Partitions} we can see all the partitions of three elements and their compatible transformations in \(\mathcal{T}\).\\

\begin{table}
\centering
\begin{tabular}{cc}
\hline
Partition & Compatible Word\\
\hline
\(1\vert 2 \vert 3\) & \(c\)\\
\hline
\(12 \vert 3\) & \(t\)\\
\hline
 \(13 \vert 2\) & \(ct\)\\
\hline 
\(1 \vert 23\) & \(cct\) \\
\hline
\(123\) & \(tct\) \\
\hline 
\end{tabular}
\caption{Partitions of three elements and their compatible words.}
\label{Table3Partitions}
\end{table}

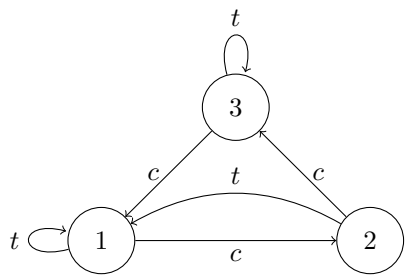
\begin{figure}
\centering
\begin{tikzpicture}[node distance =2.5cm]
	\node[state] (3) {\(3\)};
	\node[state, below left of= 3 ] (1) {\(1\)};
	\node[state, below right of= 3] (2) {\(2\)};
	
	\draw
		(1) edge[->,  below] node{\(c\)}(2)
		(2) edge[->, right] node{\(c\)}(3)
		(3) edge[->, left] node{\(c\)}(1); 
	\draw
		(1) edge[->, loop left, left] node{\(t\)}(1)
		(2) edge[->, bend right, above] node{\(t\)}(1)
		(3) edge[->, loop above, above] node{\(t\)}(3);
\end{tikzpicture}
\caption{The automaton \(\mathcal{T}\).}
\label{AutomatonT}
\end{figure}

Recall the sequence of \v{C}ern\'y automata \(\mathcal{C}_n\), with \(n \geq 2\). Each \(\mathcal{C}_n\) has \(n\) states, \(\{1, \dots, n\}\) and two letters \(\{a,b\}\); where \(a\) acts as the complete cycle in the usual order and \(b\) fixes every state except \(1\) that is sent to 2.
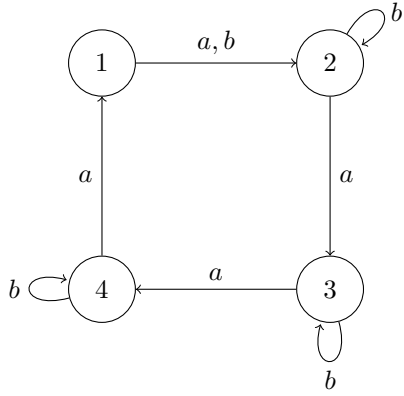
\begin{figure}[ht]
	\centering
	\begin{tikzpicture}
	\node[state] (1) {\(1\)};
	\node[state, right of=1, xshift = 2cm] (2) {\(2\)};
	\node[state, below of=2, yshift = -2cm] (3) {\(3\)};
	\node[state, below of=1, yshift = -2cm] (4) {\(4\)};
	
	\draw 
		(1) edge[->, above] node{\(a, b\)} (2)
		(2) edge[->, right] node{\(a\)} (3)
		(3) edge[->, above] node{\(a\)} (4)
		(4) edge[->, left] node{\(a\)} (1);
	\draw
		(2) edge[->, in=30, out =60, loop, right] node{\(b\)} (2)
		(3) edge[->, loop below] node{\(b\)} (3)
		(4) edge[->, loop left]  node{\(b\)} (4);
	\end{tikzpicture}
\caption{The automaton \(\mathcal{C}_4\).}
\end{figure}
These are examples of completely reachable automata (\cite{bondar2016completely}) but not totally compatible. Since there is not a permutation (a multiple of the cycle) that sends the set \(\{1,3\}\) to \(\{1,2\}\), then there is no word that synchronizes \(\{1,3\}\) and acts as a bijection on the rest of the states, thus there is no  compatible word for the partition \(13 \vert 2 \dots n\).

\section{A Characterization of Totally Compatible Automata}

Our characterization of totally compatible automata depends on the existence of a special kind of transformations. A word \(w\) is said to have \emph{deficiency 1} (or to be \emph{1-defect}) if \(\vert Q \cdot w \vert = \vert Q \vert - 1\). This means that there is exactly one state without pre-image by \(w\). Note that in order to \(w\) be 1-defect it must synchronize two different states, say \(p,q \in Q\), and it musts act as an injection on the other states, hence generating a kernel of size \(\vert Q \vert - 1\) where one equivalence class is the set \(\{p,q\}\) and the other classes are singletons.\\ 

\begin{proposition}
Let \(\mathcal{A} = (Q,\Sigma)\) be an automaton. \(\mathcal{A}\) is totally compatible if and only if for every pair of states \(p,q \in Q\) there is a 1-defect word \(w \in \Sigma^\ast\) such that \(p \cdot w = q \cdot p\).
\end{proposition}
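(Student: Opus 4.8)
The plan is to prove the two implications separately; the left-to-right direction is essentially immediate from Remark~\ref{CharacterizingCompatible}, while the right-to-left direction calls for a short inductive construction of a compatible word as a product of $1$-defect words.

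For the forward implication, I would assume $\mathcal{A}$ is totally compatible and fix distinct states $p,q\in Q$. Consider the equivalence relation $\rho$ whose classes are $\{p,q\}$ together with the singletons $\{s\}$ for every $s\in Q\setminus\{p,q\}$. Total compatibility yields a word $w$ with $\ker(w)=\rho$; by Remark~\ref{CharacterizingCompatible} this word synchronizes every class and sends distinct classes to distinct states, so $\vert Q\cdot w\vert = \vert Q\vert - 1$, i.e.\ $w$ is $1$-defect, and by construction $p\cdot w = q\cdot w$. (When $\vert Q\vert = 1$ there are no such pairs and both sides of the equivalence hold vacuously.)

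For the backward implication I would start from an arbitrary equivalence relation $\rho$ with classes $R_1,\dots,R_k$ and build a word $w$ with $\ker(w)=\rho$ greedily. The key observation is that a $1$-defect word $u$, which identifies exactly one pair of states $\{x,y\}$ and is injective elsewhere, restricts to an injective map on every subset $P\subseteq Q$ with $\{x,y\}\not\subseteq P$, and to a map identifying exactly $x$ and $y$ on every subset containing both. Now set $v_0=\varepsilon$, so $\ker(v_0)$ is the discrete partition, which refines $\rho$, and maintain the invariant that $\ker(v_t)$ refines $\rho$. While $\ker(v_t)\neq\rho$, some class of $\rho$ is split by $\ker(v_t)$ into at least two classes $C,C'$; put $x=C\cdot v_t$ and $y=C'\cdot v_t$, which are distinct states of $Q\cdot v_t$. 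By hypothesis there is a $1$-defect word $u$ with $x\cdot u = y\cdot u$, so $u$ identifies precisely the pair $\{x,y\}$; set $v_{t+1}=v_t u$. Using the observation on the subset $P=Q\cdot v_t$, one checks that $\ker(v_{t+1})$ is $\ker(v_t)$ with $C$ and $C'$ merged — still a refinement of $\rho$, with exactly one fewer class. Hence $\ker(v_0)\subsetneq\ker(v_1)\subsetneq\cdots$ is strictly increasing, and after exactly $\vert Q\vert - k$ steps we reach $\ker(v_{\vert Q\vert-k})=\rho$; taking $w=v_{\vert Q\vert-k}$ finishes the proof.

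The only point that needs genuine care — more bookkeeping than obstacle — is verifying that each application of a $1$-defect word lowers the number of kernel classes by exactly one and merges precisely the two chosen classes $C,C'$, and not more; this is exactly where the restriction-to-subsets observation is used, first to guarantee injectivity on the rest of $Q\cdot v_t$ and second to pin down which pair is collapsed. Everything else reduces to the standard fact that a strictly increasing chain of partitions of a finite set stabilizes, here necessarily at $\rho$.
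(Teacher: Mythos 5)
Your proposal is correct and is essentially the paper's argument: the forward direction is the same appeal to the partition with the single doubleton class $\{p,q\}$, and your greedy coarsening from the discrete partition, appending one $1$-defect word per merged pair of classes, is exactly the paper's downward induction on the number of classes, just unwound bottom-up. Your extra care in checking that each appended $1$-defect word merges precisely the two intended classes (via injectivity on $Q\cdot v_t$) is the same observation the paper makes when verifying that $w_\pi v$ satisfies both conditions of Remark~\ref{CharacterizingCompatible}.
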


\begin{proof}
Let \(\mathcal{A} = (Q,\Sigma)\) be an automaton and \(p,q \in Q\) two arbitrary different states. If \(\mathcal{A}\) is totally compatible, then there is a word in \(\Sigma^\ast\) compatible with the equivalence relation generated by adding the pairs \((p,q)\) and \((q,p)\) to the identity relation. It is trivial that this is a 1-defect word synchronizing these states.\\

If \(\mathcal{A}\) meets the condition, let \(\vert Q \vert = n\). It will be shown that, for any  \(1 \leq k < n-1\), if any equivalence relation of size (the number of different equivalence classes) bigger than \(k\) has a compatible word, then a word compatible with any equivalence relation of size \(k\) can be constructed. This will prove the result since the empty word (or identity transformation) is compatible with the trivial equivalence of size \(n\), and for hypothesis all the equivalence relations of size  \(n-1\) have compatible words. \\

Suppose that any equivalence relation of size  strictly bigger than \(k\) has a compatible word in \(\mathcal{A}\) and let \(\rho\) be an arbitrary equivalence relation over \(Q\) of size \(k\). Let \(R_1, \dots, R_{k}\) be the equivalent classes of \(\rho\). Without loss of generality suppose that \(\vert R_{k} \vert > 1\). Now, let \(\pi\) be an equivalence relation over \(Q\) such that  all its equivalence classes coincide with those of \(\rho\) except in the case of \(R_k\), which is divided in two different classes, i.e. \(R_{k} = P_{k} \sqcup P_{k +1}\) with \(P_k \cap P_{k+1} = \emptyset\) and neither of both sets are empty. This is possible as \(R_k\) has more than one element.\\

The size of \(\pi\) is \(k +1 \) and from the assumption there is a word \(w_\pi \in \Sigma^\ast\) compatible with \(\pi\). Note that both conditions of the remark \ref{CharacterizingCompatible} are met by \(w_\pi\) with almost all equivalence classes of \(\rho\) except for \(R_k\), where \(R_{k} \cdot w_\pi = (P_{k} \sqcup P_{k +1}) \cdot w_\pi = \{p,q\}\) and \(p \neq q\). For hypothesis, there is a 1-defect word \(v \in \Sigma^\ast\) such that \(p \cdot v = q \cdot v\), and that acts as an injection in the rest of states.  Observe that \(w_\pi v\) synchronizes \(R_k\) and hence meets both conditions of remark \ref{CharacterizingCompatible}, making it compatible with \(\rho\).
\end{proof}

This characterization suggests that in order to decide if an arbitrary automaton is totally compatible it is needed to concentrate the attention on the letters that act as permutations and as 1-defect transformations. Since these transformations have the biggest ranks (cardinal of the image set), \(n\) and \(n-1\) respectively, and the composition of transformations does not increment the rank, it would not be possible to obtain  1-defects using transformations with lower ranks.\\

Let \(\mathcal{A} = (Q,\Sigma)\) be an automaton and \(w \in \Sigma^\ast\) a 1-defect word; there is an state which does not have a pre-image by the action of \(w\), and one state that has two, call them the \emph{excluded} and \emph{duplicated} states of \(w\) respectively and denote them \(\text{exc}(w)\) and \(\text{dup} (w)\) or, if there is not place to confusions, \(e_w\) and \(d_w\). Call the set of pre-images of the duplicate state the \emph{root} of \(w\) and denote it \(\text{root}(w)\) or \(R_w\). In these terms \(\mathcal{A}\) is totally compatible if and only if every 2-subset (subset of cardinality 2) is the root of a 1-defect word.\\

Given two 1-defect words of \(\mathcal{A}\), \(w\) and \(v\), it is easy to see that the word \(wv\) keeps being 1-defect if and only if \(e_w \in R_v\), and in this case the root of \(wv\) is the same as the one of \(w\); therefore the concatenation of two 1-defect words that produce a 1-defect will not create new roots.\\

Let \(\mathcal{A} =(Q,\Sigma)\) be an automaton, \(\Sigma_0 \) and \(\Sigma_1\) be the set of letters that act as permutations and 1-defects over \(Q\) respectively. Consider the directed and labeled graph \(\mathcal{A}^{[2]}_0\) with vertex set \(Q^{[2]}\), the set of 2-subsets of \(Q\), and with \((P,T)\) as an edge with label \(\sigma \in \Sigma_0\), if and only if \(P\cdot \sigma = T\). Denote by \(\mathfrak{R} \subseteq Q^{[2]}\), the subset that contains all the roots of every 1-defect letter in \(\Sigma_1\). If, from an arbitrary \(P \in Q^{[2]}\) not in \(\mathfrak{R}\), there is a path to a root set \(R_a \in \mathfrak{R}\), and this path is labeled by the sequence \(\sigma_1, \sigma_2, \dots, \sigma_k\) (\(k \geq 1\)), it can be  concluded that: the word \(w = (\sigma_1 \sigma_2 \dots \sigma_k) \, a\) is 1-defect. It is the concatenation of a permutation and a 1-defect; and that \({\rm{root}}(w) = P\), because \(P \cdot (\sigma_1 \dots \sigma_k)= R_a \) and \(P \cdot w = \{d_a\}\).\\
From the characterization of totally compatible automata and the previous discussion the procedure to decide if \(\mathcal{A}\) is totally compatible goes as follows:

\begin{enumerate}
\item Construct the graph \(\mathcal{A}^{[2]}_0\),
\item for every 2-subset, \(P\), check if there is a path that connects \(P\) to the set \(\mathfrak{R}\). If this is the case conclude that \(\mathcal{A}\) is totally compatible.
\end{enumerate}

If the automaton \(\mathcal{A}\) has \(n\) states. The first step can be made in \(O(n^2\vert\Sigma_0 \vert)\) time, since there are \({n \choose 2}\) possible 2-subsets and for each one it is needed to determine its image by each permutation in \(\Sigma_0\). If the set \(\mathfrak{R}\) is considered as one single vertex in \(\mathcal{A}^{[2]}_0\), a Breadth-First-Search algorithm can solve the second step in time \(O((n^2 + \vert \Sigma_0 \vert)n)\). Thus, the decision problem of  totally compatibility can be solved in polynomial time.

\begin{proposition}
Totally compatible automata are \textbf{P}-decidable.
\end{proposition}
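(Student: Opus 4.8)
The plan is to check that the two-step procedure described above is correct and that it runs in polynomial time; the statement then follows. Write $n = \vert Q\vert$. By the Proposition, rephrased in terms of roots, $\mathcal{A}$ is totally compatible precisely when every $2$-subset of $Q$ is the root of some $1$-defect word, so the correctness of step~2 reduces to the claim: a $2$-subset $P$ is the root of a $1$-defect word of $\mathcal{A}$ if and only if $\mathcal{A}^{[2]}_0$ contains a path (of length $0$ allowed) from $P$ to $\mathfrak{R}$.

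One implication is exactly the discussion preceding this statement: a path from $P$ to some $R_a\in\mathfrak{R}$ labeled $\sigma_1,\dots,\sigma_k$ produces the $1$-defect word $w=\sigma_1\cdots\sigma_k\,a$ with $\text{root}(w)=P$, and if $P\in\mathfrak{R}$ the trivial path works. The substance is the converse, which I would argue as follows. Since the rank of a composition of transformations cannot exceed the rank of any factor, a word $w$ with $\vert Q\cdot w\vert=n-1$ uses only letters of $\Sigma_0\cup\Sigma_1$; accordingly write $w=x_1\cdots x_m$ with each $x_j\in\Sigma_0\cup\Sigma_1$ and let $i$ be the least index with $x_i\in\Sigma_1$, which exists because $w$ is not a permutation. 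Then $p:=x_1\cdots x_{i-1}$ is a word over $\Sigma_0$, i.e.\ a permutation, and a short computation (using that $p$ is bijective and that $x_i$ is $1$-defect with root $R_{x_i}$) gives $\text{root}(p\,x_i)=R_{x_i}\cdot p^{-1}$. Finally, since $w$ has rank $n-1$, the suffix $x_{i+1}\cdots x_m$ must act injectively on the $(n-1)$-element set $Q\cdot(p\,x_i)$, so appending it does not change the kernel; hence $\text{root}(w)=\text{root}(p\,x_i)=R_{x_i}\cdot p^{-1}$. Equivalently $\text{root}(w)\cdot p=R_{x_i}\in\mathfrak{R}$, so $p$ labels a path in $\mathcal{A}^{[2]}_0$ from $\text{root}(w)$ to $\mathfrak{R}$, which proves the claim.

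Granting the claim, step~2 decides total compatibility, and the running time is the one already worked out above: $\mathcal{A}^{[2]}_0$ has $\binom{n}{2}$ vertices and is built in $O(n^2\vert\Sigma_0\vert)$ time, while a single breadth-first search on the reversed graph, with $\mathfrak{R}$ contracted to one source vertex, decides in polynomial time whether every $2$-subset is reachable from $\mathfrak{R}$. I expect the main obstacle to be the converse half of the claim, and more precisely the bookkeeping that the rank of a $1$-defect word neither drops below $n-1$ nor returns to $n$, so that its root is already determined by the permutation prefix together with the first $1$-defect letter; the two observations above are exactly what make this rigorous.
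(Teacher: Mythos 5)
Your proof is correct and follows essentially the same route as the paper: the same graph \(\mathcal{A}^{[2]}_0\), the same reduction of total compatibility to reachability of \(\mathfrak{R}\) from every 2-subset, and the same complexity accounting. If anything, your explicit factorization \(w = p\,x_i\,s\) (permutation prefix, first 1-defect letter, injective suffix) makes the converse direction--that every root of a 1-defect word is connected to \(\mathfrak{R}\) by a permutation path--more rigorous than the paper's informal remark that concatenating 1-defect words creates no new roots.
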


The previous result poses a contrast with  complete reachability, since, until the moment, it is not clear the complexity of deciding this property. This situation is illustrated in Figure \ref{Compleixties}. There it is considered some relevant kind of automata and their decision complexities, the edges from  bottom to top represent containment relation. \\

\begin{figure}
\begin{tikzpicture}
	\node[] (1) {\small{Synchronizing (\textbf{P})}};
	\node[below right =of 1] (2) {\small{Complete  Reachable (?)}};
	\node[below left =of 1] (3) {\small{Totally  Compatible (\textbf{P})}};
	\node[below right =of 3] (4) {\small{C.R \& T.C (?)}};
	\node[below =of 4] (5) {\small{With full transformation monoid (\textbf{P})}};
	
	\draw
		(2) edge (1)
		(3) edge (1)
		(4) edge (2)
		(4) edge (3)
		(5) edge (4);
\end{tikzpicture}
\caption{Inclusions between and decision complexities of classes of automata. }
\label{Compleixties}
\end{figure}
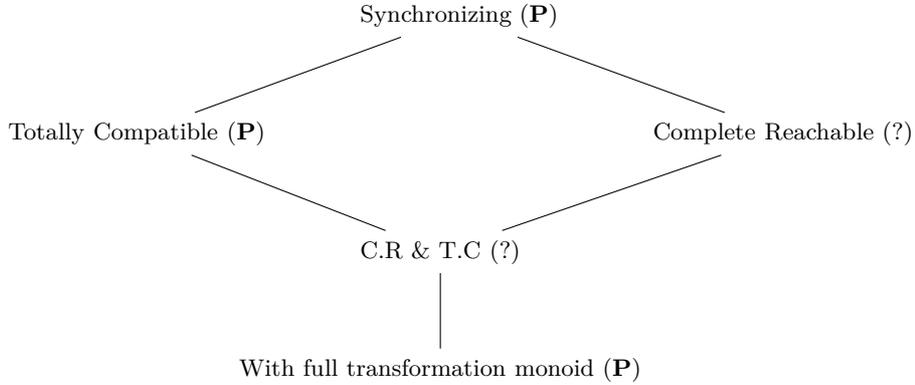

A consequence of the characterization and the fact that permutation groups which connect all possible 2-subsets of states (2-homogeneous) require at least two generators \cite{araujo2019orbits} is that  there can not be totally compatible automata with more than three states and less than three letters.

\begin{corollary}
If \(\mathcal{A} = (Q, \Sigma)\) is an totally compatible automaton and \(\vert Q \vert > 3\) then \(\vert \Sigma \vert > 2\).
\end{corollary}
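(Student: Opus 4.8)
The plan is to argue by contradiction, leaning on the characterization and on the graph $\mathcal{A}^{[2]}_0$ introduced above. Suppose $\mathcal{A}$ is totally compatible with $n := |Q| > 3$ and $|\Sigma| \le 2$. The first step is to observe that $\Sigma_1 \neq \emptyset$: a $1$-defect word has rank $n-1$, and since composing transformations never increases the rank, every letter occurring in such a word has rank $\ge n-1$; if $\Sigma_1$ were empty, this word would be a product of permutation letters and hence a permutation, a contradiction. So at least one of the (at most two) letters is $1$-defect, which leaves at most one permutation letter, i.e.\ $|\Sigma_0| \le 1$, and $|\Sigma_1| \le 2 - |\Sigma_0|$.

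The heart of the argument is to reinterpret the reachability condition of the decision procedure in terms of group orbits. Let $G = \langle \Sigma_0 \rangle$ be the permutation group generated by the permutation letters, acting on the set $Q^{[2]}$ of $2$-subsets. Since each generator is a bijection of the finite set $Q^{[2]}$, the transformations of $Q^{[2]}$ realized by words over $\Sigma_0$ form a group, namely the action of $G$; hence a path $P \to T$ in $\mathcal{A}^{[2]}_0$ exists precisely when $P$ and $T$ lie in the same $G$-orbit on $Q^{[2]}$. Next I would use the ``no new roots'' observation: decomposing any $1$-defect word as $u_0 a_1 u_1 \cdots a_m u_m$ with the $u_i$ words over $\Sigma_0$, the $a_i \in \Sigma_1$, and $m \ge 1$, a short computation (post-composing with a permutation leaves the root unchanged, and concatenating $1$-defect words does not create new roots) gives $\mathrm{root}(u_0 a_1 u_1 \cdots a_m u_m) = R_{a_1}\cdot u_0^{-1}$, so the root of every $1$-defect word lies in the $G$-orbit of the root of some $1$-defect letter. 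Combined with the characterization this yields the clean equivalent statement: $\mathcal{A}$ is totally compatible if and only if every $G$-orbit on $Q^{[2]}$ meets $\mathfrak{R}$; in particular the number of $G$-orbits on $Q^{[2]}$ is at most $|\mathfrak{R}| \le |\Sigma_1|$.

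To finish, split on $|\Sigma_0|$. If $\Sigma_0 = \emptyset$, then $G$ is trivial and its orbits on $Q^{[2]}$ are the $\binom{n}{2} \ge 6$ singletons, contradicting that this number is at most $|\Sigma_1| \le 2$. If $|\Sigma_0| = 1$, then $|\Sigma_1| \le 1$, so $G$ has a single orbit on $Q^{[2]}$, i.e.\ $G$ is $2$-homogeneous; but $G$ is generated by a single element, hence cyclic, contradicting the fact \cite{araujo2019orbits} that a $2$-homogeneous group of degree greater than $3$ cannot be generated by fewer than two elements. Either way we reach a contradiction, so $|\Sigma| > 2$. I expect the only delicate step to be the middle paragraph — specifically, verifying that path-reachability in $\mathcal{A}^{[2]}_0$ coincides exactly with $G$-orbit equivalence and that $\mathfrak{R}$ really captures all roots attainable by $1$-defect \emph{words}, not merely letters — since everything afterwards is a two-line case check.
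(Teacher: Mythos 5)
Your proof is correct and follows essentially the same route as the paper, which compresses the entire argument into a single sentence invoking the characterization together with the cited fact that \(2\)-homogeneous permutation groups (of degree greater than \(3\)) require at least two generators. Your write-up merely supplies the details the paper leaves implicit --- that every letter of a \(1\)-defect word lies in \(\Sigma_0 \cup \Sigma_1\), that path-reachability in \(\mathcal{A}^{[2]}_0\) coincides with \(G\)-orbit equivalence, the bound on the number of orbits by \(\vert \Sigma_1 \vert\), and the degenerate case \(\Sigma_0 = \emptyset\) --- and all of these check out.
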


Let \(\mathcal{A} = (Q, \Sigma)\) be an automaton. It is said that \(\mathcal{A}\) is a minimal completely reachable automaton if the transformation semigroup generated by \(\Sigma\) has exactly \(2^n - 1\) different transformations, thus one unique transformation for every non-empty subset. In the same sense, define \(\mathcal{A}\) as a \emph{minimal totally compatible automaton} if for every possible equivalence relation over \(Q\), there is only one compatible transformation in the generated semigroup; therefore the semigroup generated by the letters has exactly \(\mathcal{B}_{\vert Q \vert}\) transformations, where \(\mathcal{B}_n\) stands for the Bell's number of a set with \(n\) elements. In \cite[Section 12.4]{Ganyushkin2008ClassicalFT} it is presented the construction of a minimal totally compatible automaton with \(n \in \mathbb{N}\) states. In the terminology used there this automata is an \(\mathcal{L}\)-cross section in the full transformation monoid, moreover  it is proven that, up to isomorphism, this automaton is unique. From the construction it can be derived that this automaton requires  at least \({n \choose 2}\) letters, one 1-defect for each possible 2-subset of a set with \(n\) states. This automaton is an example of a totally compatible automaton which is not completely reachable since the images of all 1-defect generators are the same and it has no permutations letter.\\

\section{Conclusion}

Inspired by the notion of automata that can obtain every possible non-empty subset of states, in this note it was investigated the dual notion; characterizing automata that can obtain every possible partition of the states set. It was, also, proposed an algorithm that runs in polynomial time on the number of states to identify this kind of automata.\\

The research of completely reachable automata was, partly, conducted to further study the synchronization problem and the \v{C}ern\'y conjecture. It is trivial to see that if an automaton is totally compatible is synchronizable since there must be a transformation for the complete equivalence relation. Thus, it is natural to ask for a bound of the shortest synchronization word of a totally compatible automaton. If a totally compatible automaton has just one letter of defect 1, then the group generated by its permutation letter must be 2-homogeneous and therefore primitive \cite{beaumont1955set}. Although not explicitly stated in \cite[Remark 8]{gonze2019interplay} it is proven that automata where its permutation letters generate a 2-homogeneous group have a synchronization word of size at most quadratic on the size of states. Nevertheless the previously mentioned, it is yet open to give an upper bound to the length of the shortest word that synchronizes an arbitrary totally compatible automaton.

\bibliographystyle{plain}
\bibliography{Bibliography}
\end{document}